\documentclass[11pt]{article}
\usepackage[T1]{fontenc}
\usepackage[utf8]{inputenc}
\usepackage[a4paper,margin=1in]{geometry}
\usepackage{amsmath,amssymb,amsfonts,bm,amsthm}
\usepackage{booktabs}
\usepackage{array}
\usepackage{caption}
\usepackage{pgfplots}
\usepackage{pgfplotstable}
\usepackage{hyperref}
\pgfplotsset{compat=1.17}
\numberwithin{equation}{section}

\newtheorem{theorem}{Theorem}[section]
\newtheorem{assumption}{Assumption}[section]
\newtheorem{remark}{Remark}[section]
\newcommand{\ii}{\mathrm{i}}
\newcommand{\dd}{\,\mathrm{d}}
\newcommand{\RR}{\mathbb{R}}

\title{A High-Order Nystr\"om Method for Coupled Boundary Integral Equations in Oblique-Incidence Scattering by Impedance Cylinders}

\author{
Haochen Liu \quad Qinghao Yu\\
School of Mathematics and Physics, Qingdao University of Science and Technology\\
\texttt{liuhc@mails.qust.edu.cn}\quad \texttt{19707725295@163.com}
}

\date{}

\begin{document}
\maketitle

\begin{abstract}
We study the numerical solution of electromagnetic scattering by an infinitely long impedance cylinder under oblique incidence. After separation of the axial phase factor, the axial electric and magnetic components satisfy a pair of coupled two-dimensional Helmholtz equations. The Leontovich impedance condition couples these components through tangential derivatives, and the associated boundary integral system contains both logarithmic kernels and principal-value tangential derivative terms. Building on existing coupled integral-equation formulations for oblique-incidence cylinder scattering, we construct a high-order Nystr\"om implementation based on Kress-type logarithmic kernel decomposition, periodic product quadrature, Fourier differentiation for the tangential derivative contribution, and a block diagonal preconditioner associated with the scalar impedance subproblem. Under uniqueness of the continuous scattering problem and a uniform discrete stability assumption, we formulate a high-order convergence framework for the boundary densities and far-field patterns. Numerical experiments include a manufactured Fourier--Bessel benchmark, a plane-wave circular-cylinder validation, a smooth non-circular boundary test, condition-number and GMRES comparisons, and a variable-impedance scattering-width reduction example in a prescribed backward angular sector. The results indicate that the method provides a stable high-accuracy forward solver for the coupled impedance system, rather than a new physical model.
\end{abstract}

\noindent\textbf{Keywords.} electromagnetic scattering; oblique incidence; impedance cylinder; coupled boundary integral equation; Nystr\"om method; singular quadrature; scattering-width reduction

\medskip
\noindent\textbf{Suggested arXiv classification.} Primary: math-ph; secondary: math.NA, physics.comp-ph.

\section{Introduction}

Electromagnetic scattering by infinitely long cylindrical structures is a classical reduction of Maxwell's equations that remains useful in the modelling of coated wires, elongated targets, cylindrical scatterers, radar cross-section estimation, and inverse scattering. When the incident wave is not perpendicular to the cylinder axis, the axial phase factor can be separated and the three-dimensional scattering problem reduces to a two-dimensional transverse problem with a nonzero axial propagation constant. In contrast with normal incidence, the axial electric and magnetic components are no longer independent scalar fields; they are coupled through the boundary condition and through the oblique-incidence geometry.

The Leontovich impedance boundary condition is a common effective model for thin coatings, lossy materials, rough conductors, and surfaces with an approximate electromagnetic response \cite{senior1995}. In the present setting it relates the tangential electric and magnetic fields on the cylinder surface. For normal incidence the axial formulation may reduce to scalar impedance problems. For oblique incidence, however, the axial electric component and the axial magnetic component enter each other's boundary condition through arclength derivatives. This tangential coupling is the main mathematical feature that distinguishes the present system from two uncoupled scalar impedance equations.

Boundary integral equations are attractive for exterior scattering problems because the outgoing radiation condition is built into the fundamental solution and the unknowns are restricted to the boundary \cite{colton1983,mclean2000,nedelec2001,sauter2011}. The price for this dimensional reduction is the presence of singular kernels. The Helmholtz single-layer operator has a logarithmic singularity on smooth curves, while tangential differentiation of the layer potential introduces a principal-value singularity. If these singular structures are treated by low-order quadrature or finite differences, the error may be transferred directly into the far-field phase and hence into scattering-width calculations.

The physical model and the boundary-integral formulation considered here are closely related to earlier works on oblique-incidence scattering by cylinders and impedance-type boundaries \cite{wang2012,tsalamengas2007,akduman2003,gintides2020,mindrinos2018,gintides2017}. Kress quadrature, Nystr\"om discretizations, Fourier differentiation, and periodic boundary pseudodifferential analysis are also established tools \cite{kress1995,kress2014,atkinson1997,saranen2002}. Therefore, the present paper does not claim to introduce a new oblique-incidence impedance-cylinder model or a first integral-equation formulation. Its contribution lies in a specific high-order singular-quadrature implementation for the coupled impedance system, together with a stability-based convergence formulation and reproducible numerical validation.

The contributions are as follows.
\begin{enumerate}
\item We formulate the coupled impedance boundary integral system in notation suitable for high-order periodic discretization and clarify the role of the tangential derivative operator.
\item We construct a high-order Nystr\"om implementation based on Kress-type logarithmic kernel splitting and Fourier differentiation for the tangential derivative contribution.
\item We introduce a block diagonal preconditioner associated with the scalar impedance subproblem and examine its effect on condition numbers and GMRES iterations.
\item We provide reproducible numerical experiments, including Fourier--Bessel benchmarks, plane-wave validation, smooth non-circular geometries, and a variable-impedance scattering-width reduction example.
\end{enumerate}

The paper is organized as follows. Section 2 reviews related work. Section 3 states the axial-component scattering model and the coupled boundary integral equation. Section 4 describes the Nystr\"om discretization and the matrix assembly. Section 5 gives the stability-based convergence framework. Section 6 presents the numerical experiments. Section 7 summarizes the results and limitations.

\section{Related Work}

The literature relevant to this work may be grouped into several strands. First, oblique-incidence scattering by cylindrical structures has been studied for penetrable and impenetrable cross sections, including direct and inverse formulations \cite{gintides2020,mindrinos2018,gintides2017}. The coupled integral-equation setting for oblique-incidence impedance-cylinder scattering is especially close to the work of Wang and Nakamura, who studied an integral-equation method for electromagnetic scattering at oblique incidence \cite{wang2012}. High-order and exponentially convergent Nystr\"om ideas have also appeared in oblique-incidence cylindrical scattering, for example in the work of Tsalamengas on composite dielectric cylinders \cite{tsalamengas2007}. Inhomogeneous impedance cylinders and related direct and inverse scattering problems have been considered by Akduman and Kress \cite{akduman2003}. Second, boundary integral equations for acoustic and electromagnetic scattering are classical and provide the functional-analytic basis for Fredholm formulations \cite{colton1983,colton1998,mclean2000,nedelec2001}. Third, high-order Nystr\"om methods and Kress-type quadratures for logarithmic and hypersingular kernels are standard tools for smooth periodic curves \cite{nystrom1930,kress1995,kress2014,atkinson1997}. Fourth, periodic integral equations, Fourier differentiation, and pseudodifferential operator arguments provide a useful language for understanding tangential derivative operators \cite{saranen2002}. Finally, impedance boundary conditions and their use in scattering control or inverse design motivate variable-impedance examples \cite{senior1995,akduman2003,cakoni2006}. The present contribution is therefore not the introduction of a new scattering model or a first integral-equation formulation. Rather, it is a specific Kress-type singular-quadrature implementation, a stability-based convergence formulation, and a numerical validation study for the coupled impedance system considered here.

\section{Mathematical Model and Boundary Integral Formulation}

\subsection{Axial-component scattering model}

Let $D\subset\RR^2$ be the bounded cross section of an infinitely long cylinder, with smooth boundary $\Gamma=\partial D$, and let $\Omega=\RR^2\setminus\overline D$ be the exterior domain. We use the time factor $\exp(-\ii\omega t)$. The free-space wavenumber is $k>0$, the incidence angle relative to the cylinder axis is $\alpha\in(0,\pi/2]$, and
\begin{equation}
\beta=k\cos\alpha,\qquad \kappa=k\sin\alpha
\end{equation}
denote the axial propagation constant and transverse wavenumber, respectively.

After separation of the factor $\exp(\ii\beta z)$, the scattered axial electric and magnetic components
\[
u=E_z^s,\qquad v=H_z^s
\]
satisfy
\begin{equation}
\Delta u+\kappa^2u=0,\qquad \Delta v+\kappa^2v=0\qquad \text{in }\Omega,
\label{eq:helmholtz}
\end{equation}
together with the Sommerfeld radiation condition for each component. Let $\nu$ be the exterior unit normal, let $\tau$ be the positively oriented unit tangent, and let $\partial_s=\tau\cdot\nabla$ denote arclength differentiation. With a fixed orientation convention, the axial-component form of the normalized Leontovich condition is written as
\begin{align}
\partial_\nu u+\ii\eta u-\mu\,\partial_s v&=f_1 \quad \text{on }\Gamma, \label{eq:bc1}\\
\partial_\nu v+\ii\eta v+\mu\,\partial_s u&=f_2 \quad \text{on }\Gamma. \label{eq:bc2}
\end{align}
Here $\eta$ is the normalized surface impedance, which may be a smooth complex-valued function on $\Gamma$, and $\mu$ is an oblique-incidence coupling coefficient determined by the normalization and the material parameters. The signs of the off-diagonal terms depend on the orientation of $\tau$; reversing the orientation changes both signs and leaves the coupled structure unchanged. We assume a passive impedance in the sense appropriate to the chosen time convention, and in the experiments use $\operatorname{Re}\eta\ge0$ and $\operatorname{Im}\eta\ge0$.

\subsection{Layer potentials and the coupled system}

Let
\begin{equation}
\Phi_\kappa(x,y)=\frac{\ii}{4}H_0^{(1)}(\kappa |x-y|)
\end{equation}
be the outgoing fundamental solution of the two-dimensional Helmholtz equation. For a boundary density $\varphi$ define the single-layer potential
\begin{equation}
(\mathcal S\varphi)(x)=\int_\Gamma \Phi_\kappa(x,y)\varphi(y)\,\dd s_y,\qquad x\in\Omega.
\end{equation}
On $\Gamma$ we write
\begin{align}
(S\varphi)(x)&=\int_\Gamma \Phi_\kappa(x,y)\varphi(y)\,\dd s_y,\\
(K'\varphi)(x)&=\operatorname{p.v.}\int_\Gamma \partial_{\nu_x}\Phi_\kappa(x,y)\varphi(y)\,\dd s_y,\\
(T\varphi)(x)&=\partial_s(S\varphi)(x).
\end{align}
The operator $S$ maps $H^{-1/2}(\Gamma)$ to $H^{1/2}(\Gamma)$, while $K'$ and $T$ are naturally interpreted on Sobolev trace spaces of order $-1/2$ after the standard embeddings and principal-value interpretation. With our convention for the exterior normal trace of the single-layer potential,
\begin{equation}
\partial_\nu^+\mathcal S\varphi=\left(\frac12 I+K'\right)\varphi .
\end{equation}

We represent the scattered fields by two single-layer potentials,
\begin{equation}
u=\mathcal S\varphi_1,\qquad v=\mathcal S\varphi_2.
\end{equation}
Substitution into \eqref{eq:bc1}--\eqref{eq:bc2} gives the coupled boundary integral equation
\begin{equation}
\mathcal A
\begin{bmatrix}\varphi_1\\ \varphi_2\end{bmatrix}
=
\begin{bmatrix}f_1\\ f_2\end{bmatrix},
\qquad
\mathcal A=
\begin{bmatrix}
L&-\mu T\\
\mu T&L
\end{bmatrix},
\label{eq:bie}
\end{equation}
where
\begin{equation}
L=\frac12 I+K'+\ii M_\eta S .
\end{equation}
Here $M_\eta$ is multiplication by $\eta$, so that $M_\eta S\varphi=\eta(\cdot)(S\varphi)(\cdot)$. The diagonal blocks are scalar impedance boundary integral operators. The off-diagonal blocks contain the tangential derivative operator generated by the oblique-incidence coupling.

\subsection{Fredholm framework}

\begin{theorem}[Fredholm solvability framework]
Let $\Gamma$ be a sufficiently smooth closed curve and let $\eta$ be a sufficiently smooth passive impedance. Assume that the single-layer representation is not affected by an internal non-resonance obstruction and that the principal symbol of the coupled boundary operator is elliptic in the relevant Sobolev product space. Then
\[
\mathcal A:H^{-1/2}(\Gamma)^2\to H^{-1/2}(\Gamma)^2
\]
is Fredholm of index zero. If the corresponding exterior impedance scattering problem is uniquely solvable, then the boundary integral equation \eqref{eq:bie} is uniquely solvable.
\end{theorem}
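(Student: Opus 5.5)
The plan is to split $\mathcal A$ into a principal part plus a compact remainder on $X:=H^{-1/2}(\Gamma)^2$. Since the statement assumes ellipticity of the principal symbol, this part only needs to be made explicit and combined with standard facts.

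\textbf{Step 1: the compact pieces.} On a smooth curve, $S:H^{-1/2}\to H^{1/2}$ is bounded. Multiplication by a smooth $\eta$ preserves $H^{1/2}$, and $H^{1/2}\hookrightarrow H^{-1/2}$ compactly by Rellich. Hence $\ii M_\eta S$ is compact on $H^{-1/2}(\Gamma)$. The kernel of $K'$ is smooth on a $C^2$ (or smoother) curve, because $\partial_{\nu_x}\Phi_\kappa$ has only a smooth-times-$|x-y|^2\log|x-y|$ type singularity. So $K'$ is smoothing and compact. Consequently
\[
L=\tfrac12 I+\text{compact}.
\]

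\textbf{Step 2: the tangential derivative.} Write $T=\partial_s S$. Using Kress's splitting of $\Phi_\kappa$ into $-\frac1{2\pi}\log|x-y|$ plus a smoother kernel, I would identify $T=T_0+R$. Here $T_0=\partial_s S_0$ is, in the arclength parametrization, $\frac{1}{2}H$ up to sign, with $H$ the periodic Hilbert transform of symbol $-\ii\,\mathrm{sign}(n)$ acting on Fourier coefficients. The remainder $R$ has order $\le -1$ and is therefore compact on $H^{-1/2}$.

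The principal part of $\mathcal A$ is then the order-zero matrix operator
\[
\mathcal A_0=\begin{bmatrix}\tfrac12 I&-\mu T_0\\ \mu T_0&\tfrac12 I\end{bmatrix},
\]
with symbol
\[
\begin{bmatrix}1/2&\pm\ii\mu\,\mathrm{sign}(n)/2\\ \mp\ii\mu\,\mathrm{sign}(n)/2&1/2\end{bmatrix}.
\]
Its determinant is $(1-\mu^2)/4$, up to the sign conventions. The assumed ellipticity is exactly the statement that this determinant does not vanish, i.e.\ $\mu^2\neq1$ (for complex $\mu$, that it stays away from zero).

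Under this condition, $\mathcal A_0$ is invertible on $X$ modulo the zero mode, which contributes only a finite-rank correction. An explicit inverse comes from the identity $H^2=-I+P_0$, where $P_0$ is the mean projection. Hence $\mathcal A=\mathcal A_0+\mathcal K$ with $\mathcal K$ compact, so $\mathcal A$ is Fredholm. Its index is zero because the family $\mathcal A_0+t\mathcal K$, $t\in[0,1]$, is Fredholm throughout, and $\mathcal A_0$ itself has index zero (invertible plus finite rank). The main technical obstacle is this symbol calculation. It requires tracking the orientation convention and the constant in $\partial_sS_0$ carefully, and checking that $R$ is genuinely of negative order. That follows from the parametrized kernel $\partial_t[\log(4\sin^2\frac{t-\tau}{2})K_1+K_2]$ with smooth $K_1$ vanishing on the diagonal.

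\textbf{Step 3: uniqueness.} By Fredholm index zero, it suffices to show injectivity. Suppose $\mathcal A(\varphi_1,\varphi_2)^T=0$. Then $u=\mathcal S\varphi_1$ and $v=\mathcal S\varphi_2$ are radiating solutions of \eqref{eq:helmholtz} satisfying \eqref{eq:bc1}--\eqref{eq:bc2} with $f_1=f_2=0$. Here I use that tangential derivatives of the single layer are continuous across $\Gamma$, so $\partial_s u=T\varphi_1$. Uniqueness of the exterior problem gives $u=v=0$ in $\Omega$. By continuity of the single-layer trace, $\mathcal S\varphi_j$ then solves the interior Dirichlet problem in $D$ with zero data. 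The non-resonance hypothesis (that $\kappa^2$ is not an interior Dirichlet eigenvalue) forces $\mathcal S\varphi_j=0$ in $D$. The jump relation $\varphi_j=\partial_\nu^-\mathcal S\varphi_j-\partial_\nu^+\mathcal S\varphi_j$ then yields $\varphi_j=0$. Injectivity plus index zero gives bijectivity, and bounded inverse follows from the open mapping theorem.
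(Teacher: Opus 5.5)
Your proposal is correct and follows the same architecture as the paper's proof. Fredholmness comes from splitting $\mathcal A$ into an elliptic principal part plus compact terms, and injectivity comes from the vanishing of the exterior fields, interior Dirichlet non-resonance and the single-layer jump relation, closed by the Fredholm alternative. You are more explicit than the paper, which only asserts ellipticity and index zero and rederives exterior uniqueness from passivity via Green's identity and Rellich's lemma; you instead compute the principal symbol (ellipticity iff $\mu^2\neq1$), get index zero from the explicitly invertible principal part, and take exterior uniqueness directly from the stated hypothesis.
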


\begin{proof}
We give the standard proof outline, since a complete treatment requires the full boundary pseudodifferential calculus. Consider first the homogeneous equation $\mathcal A\varphi=0$ and let $u=\mathcal S\varphi_1$, $v=\mathcal S\varphi_2$. The jump relations imply that the exterior traces satisfy the homogeneous version of \eqref{eq:bc1}--\eqref{eq:bc2}. Thus $(u,v)$ solves the exterior coupled Helmholtz problem with the radiation condition. Applying Green's identity on $\Omega\cap B_R$ and letting $R\to\infty$, the radiation condition and the passivity of the impedance yield vanishing radiated energy. Rellich's lemma then implies that the far-field pattern is zero, and unique continuation gives $u=v=0$ in the exterior.

It remains to connect the vanishing exterior fields to the densities. The boundary traces and the single-layer jump relations show that the densities solve an associated interior homogeneous problem. Under the stated internal non-resonance condition this implies $\varphi_1=\varphi_2=0$. Hence the nullspace is trivial. Fredholmness follows from the decomposition of $\mathcal A$ into the scalar impedance part, smoothing contributions of the single-layer trace, and the principal-value tangential term, whose coupled principal symbol is assumed elliptic. Since the index is zero, the Fredholm alternative gives unique solvability.
\end{proof}

\begin{remark}
The theorem is intentionally formulated as a solvability framework rather than an unconditional theorem for every impedance and every wavenumber. Near internal resonances or loss of ellipticity, the single-layer formulation may require a combined-field or regularized variant.
\end{remark}

\section{High-Order Nystr\"om Discretization}

\subsection{Parametrization and logarithmic splitting}

Let $x(t)$, $0\le t<2\pi$, be a smooth $2\pi$-periodic parametrization of $\Gamma$ with $|x'(t)|>0$. The single-layer trace is written as
\begin{equation}
(S\varphi)(x(t))=\int_0^{2\pi}\Phi_\kappa(x(t),x(s))\varphi(s)|x'(s)|\,\dd s .
\end{equation}
For $t\ne s$ we use the Kress-type splitting
\begin{equation}
\Phi_\kappa(x(t),x(s))
=M_1(t,s)\log\left(4\sin^2\frac{t-s}{2}\right)+M_2(t,s),
\label{eq:kress_split}
\end{equation}
with
\begin{equation}
M_1(t,s)=-\frac{1}{4\pi}J_0(\kappa |x(t)-x(s)|).
\end{equation}
The remainder $M_2$ is continuous, and is analytic when the boundary is analytic. The diagonal value is
\begin{equation}
M_2(t,t)=\frac{\ii}{4}
-\frac{1}{2\pi}\left(\log\frac{\kappa |x'(t)|}{2}+\gamma_E\right),
\label{eq:m2diag_en}
\end{equation}
where $\gamma_E$ is Euler's constant. This formula removes the ambiguity of diagonal matrix entries.

Let $N=2n$, $h=2\pi/N=\pi/n$, and $t_j=j\pi/n$, $j=0,\ldots,2n-1$. We use product quadrature weights that already include the integration weight for the logarithmic singular part. Specifically, the weights for
\[
\int_0^{2\pi}\log\left(4\sin^2\frac{t-s}{2}\right)g(s)\,\dd s
\]
are
\begin{equation}
R_j^{(n)}(t)=
-\frac{2\pi}{n}\sum_{m=1}^{n-1}\frac{\cos m(t-t_j)}{m}
-\frac{\pi}{n^2}\cos n(t-t_j).
\label{eq:kress_weights_en}
\end{equation}
Thus no additional factor $h$ is applied to the first term below. The single-layer matrix is assembled as the sum of a logarithmic singular weighted contribution and a smooth trapezoidal contribution:
\begin{equation}
\begin{split}
(S_N)_{ij}
&=M_1(t_i,t_j)R_j^{(n)}(t_i)|x'(t_j)|\\
&\quad
+h\,M_2(t_i,t_j)|x'(t_j)|.
\end{split}
\label{eq:single_matrix_en}
\end{equation}
In \eqref{eq:single_matrix_en}, the diagonal value $M_2(t_i,t_i)$ is evaluated by the limiting formula \eqref{eq:m2diag_en}, and the Jacobian factor $|x'(t_j)|$ is included in both contributions.

\subsection{Normal and tangential derivative terms}

The normal derivative matrix $K_N'$ is obtained by applying the periodic trapezoidal rule to the smooth off-diagonal part of $\partial_{\nu_x}\Phi_\kappa(x(t),x(s))|x'(s)|$ and by inserting the analytic diagonal limit determined by the curvature of the parametrized boundary. The exterior normal derivative block is represented by
\[
\frac12 I+K_N' .
\]

For the tangential derivative operator, we avoid applying a low-order finite difference directly to a singular kernel. Let $D_N$ be the Fourier differentiation matrix on the grid:
\begin{equation}
(D_N)_{ij}=
\begin{cases}
0,&i=j,\\[1mm]
\displaystyle \frac12(-1)^{i-j}\cot\frac{t_i-t_j}{2},&i\ne j .
\end{cases}
\end{equation}
In the implementation, the tangential derivative contribution is assembled by applying Fourier differentiation to the periodic Nystr\"om representation of the single-layer potential, together with the smooth correction terms arising from the kernel splitting. For the smooth-boundary tests reported below this is represented in matrix form as
\begin{equation}
T_N=\operatorname{diag}(|x'(t_i)|^{-1})D_NS_N .
\label{eq:tangent_matrix_en}
\end{equation}
This expression should be understood as an implementation rule for the split periodic representation, not as a claim that all hypersingular effects are absent in less regular geometries.

\subsection{Coupled matrix and preconditioning}

Let $\eta_N=\operatorname{diag}(\eta(t_i))$. The discrete coupled system is
\begin{equation}
A_N
\begin{bmatrix}\bm\varphi_1\\ \bm\varphi_2\end{bmatrix}
=
\begin{bmatrix}\bm f_1\\ \bm f_2\end{bmatrix},
\qquad
A_N=
\begin{bmatrix}
L_N&-\mu T_N\\
\mu T_N&L_N
\end{bmatrix},
\label{eq:disc_system_en}
\end{equation}
where
\begin{equation}
L_N=\frac12 I+K_N'+\ii\eta_NS_N .
\end{equation}
A simple block diagonal preconditioner is
\begin{equation}
P_N=
\begin{bmatrix}
\bar L_N&0\\
0&\bar L_N
\end{bmatrix},
\qquad
\bar L_N=\frac12 I+K_N'+\ii\bar\eta S_N ,
\label{eq:preconditioner_en}
\end{equation}
where $\bar\eta$ is either the constant impedance or the mean of the variable impedance. This preconditioner accounts for the scalar impedance part and leaves the off-diagonal tangential coupling as the main perturbation.

\medskip
\noindent\fbox{%
\begin{minipage}{0.96\textwidth}
\textbf{Algorithm 1. High-order Nystr\"om discretization for the coupled impedance system}
\begin{enumerate}
\item \textbf{Input:} wavenumber $k$, incidence angle $\alpha$, transverse wavenumber $\kappa$, coupling coefficient $\mu$, impedance $\eta$, boundary parametrization $x(t)$, node number $N=2n$, and solver tolerance.
\item Set nodes $t_j=j\pi/n$ and compute $x(t_j)$, $x'(t_j)$, unit tangents, unit normals, curvature data, and quadrature spacing $h=2\pi/N$.
\item Form the Kress splitting \eqref{eq:kress_split}; evaluate $M_1$, the off-diagonal values of $M_2$, and the diagonal limit \eqref{eq:m2diag_en}.
\item Assemble the single-layer matrix $S_N$ by \eqref{eq:single_matrix_en}, using the product weights \eqref{eq:kress_weights_en}.
\item Assemble the normal derivative matrix $K_N'$ from the split kernel and the curvature diagonal limit; form $\frac12I+K_N'$.
\item Assemble the tangential derivative matrix by applying Fourier differentiation to the periodic Nystr\"om representation, as in \eqref{eq:tangent_matrix_en}.
\item Build the coupled matrix $A_N$ in \eqref{eq:disc_system_en}.
\item Optionally construct the block diagonal preconditioner $P_N$ in \eqref{eq:preconditioner_en}.
\item Solve the linear system by a direct method or by GMRES with optional left preconditioning.
\item Evaluate the far-field pattern and the scattering intensity from the computed densities; normalize by a specified reference profile when comparing impedance configurations.
\end{enumerate}
\end{minipage}}

\section{Stability and Convergence Framework}

The following assumptions separate the continuous scattering model from the discrete stability issue.

\begin{assumption}[Continuous well-posedness]\label{ass:continuous}
The exterior coupled impedance scattering problem and the boundary integral equation \eqref{eq:bie} are uniquely solvable for the wavenumber and impedance under consideration.
\end{assumption}

\begin{assumption}[Smoothness]\label{ass:smoothness}
The boundary parametrization, impedance, right-hand side, and exact densities are sufficiently smooth. For algebraic estimates we assume $C^p$ regularity with $p$ large enough; for rapid convergence observations we assume analytic boundary and analytic impedance.
\end{assumption}

\begin{assumption}[Uniform discrete stability]\label{ass:stability}
For all sufficiently large $N$, the matrices $A_N$ are invertible and their inverses are uniformly bounded in the discrete norm corresponding to the relevant Sobolev product space.
\end{assumption}

\begin{theorem}[Stability-based error estimate]
Under Assumptions \ref{ass:continuous}--\ref{ass:stability}, the Nystr\"om density error is bounded by the consistency error of the singular quadrature and the Fourier differentiation used for the tangential derivative term. In particular, for $C^p$ data one obtains algebraic high-order convergence up to the order allowed by the regularity and quadrature consistency. For analytic data and stable discretizations, rapid convergence is expected and is observed numerically until round-off errors dominate.
\end{theorem}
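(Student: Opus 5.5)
The natural route is the standard "stability plus consistency" argument for Nyström methods. Let $\bm\varphi=(\varphi_1,\varphi_2)$ be the exact density from Assumption~\ref{ass:continuous}, and let $R_N\bm\varphi$ be its restriction to the nodes $t_j$. Write $\bm\varphi_N$ for the discrete solution of \eqref{eq:disc_system_en}, with right-hand side $R_N\bm f$ consisting of nodal samples of the data. Subtracting gives the error equation
\[
A_N\bigl(R_N\bm\varphi-\bm\varphi_N\bigr)=A_NR_N\bm\varphi-R_N\mathcal A\bm\varphi=:\bm\tau_N ,
\]
so Assumption~\ref{ass:stability} immediately yields $\|R_N\bm\varphi-\bm\varphi_N\|\le C\|\bm\tau_N\|$, with $C$ independent of $N$. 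This is the first claim of the theorem: the density error is controlled by the truncation (consistency) error $\bm\tau_N$. Continuous densities are then recovered by the Nyström interpolant, i.e.\ by trigonometric interpolation of $\bm\varphi_N$. The far-field error follows because the far-field functional is a smooth-kernel integral evaluated by the trapezoidal rule, which contributes its own consistency error plus a bounded multiple of the density error.

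The work is in bounding $\bm\tau_N$ block by block.
\begin{enumerate}
\item \emph{Logarithmic part of $S_N$.} By construction the weights \eqref{eq:kress_weights_en} integrate $\log(4\sin^2\frac{t-s}{2})$ exactly against trigonometric polynomials of degree $<n$. The error is therefore bounded by the trigonometric interpolation error of $s\mapsto M_1(t_i,s)\varphi(s)|x'(s)|$, which is $O(n^{-q})$ for $C^p$ data ($q$ depending on $p$) and $O(e^{-\sigma n})$ for analytic data (Kress's standard estimates).
\item \emph{Smooth parts.} The smooth part $M_2$ and the off-diagonal kernel of $K'$ are handled by the classical error bounds for the periodic trapezoidal rule. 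The diagonal limits \eqref{eq:m2diag_en} and the curvature limit ensure that these integrands really are smooth.
\item \emph{Tangential block $T_N$.} Here the error splits as
\[
D_NS_NR_N\varphi-R_N\partial_t S\varphi=D_N\bigl(S_NR_N\varphi-R_NS\varphi\bigr)+\bigl(D_NR_N-R_N\partial_t\bigr)S\varphi .
\]
The second term is the spectral differentiation error of the smooth periodic function $S\varphi$, which is algebraic or exponential as appropriate.
\end{enumerate}

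The main obstacle is the first term in step 3. The matrix $D_N$ has norm of order $n$, so a naive bound loses one power of $n$ relative to the consistency error of $S_N$. I would avoid this by measuring $\bm\tau_N$ in the discrete $H^{-1/2}$-type norm used in Assumption~\ref{ass:stability}. In that norm $D_N$ acts as an order-one operator between discrete Sobolev scales, so it suffices to bound the $S_N$ consistency error in a discrete $H^{1/2}$ norm. That bound I would obtain by observing that $S_NR_N\varphi-R_NS\varphi$ equals the nodal trace of $S$ applied to $(I-P_n)\varphi$, where $P_n$ is trigonometric interpolation, plus a smooth-kernel remainder. The mapping property $S:H^{-1/2}\to H^{1/2}$ and interpolation estimates then give the needed order. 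If this is too delicate, the fallback is to accept the loss of one order in the $C^p$ case, which is still "high order up to the order allowed by regularity", and in the analytic case to absorb the factor $n$ into the exponential rate. Combining the three consistency bounds with the stability estimate gives both the algebraic and the rapid convergence statements. The final remark about round-off is a numerical observation rather than something the proof needs to establish.
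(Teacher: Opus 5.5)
Your skeleton is the paper's own four-step proof: logarithmic splitting, Kress and trapezoidal consistency, consistency of Fourier differentiation, the uniform bound on $A_N^{-1}$ turning consistency into a density error, then the far field. Your handling of $\|D_N\|\sim n$ is more careful than the paper's. You split the error as $D_N(S_NR_N\varphi-R_NS\varphi)+(D_NR_N-R_N\partial_t)S\varphi$ and use discrete $H^{\pm1/2}$ norms, whereas the paper's third step only covers your second term.

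The step that fails is your step~2. For $\kappa>0$ the kernel of $K'$ is not smooth, and inserting the curvature diagonal limit does not make it so. Write $r=|x(t)-x(s)|$. Since $\partial_{\nu_x}\Phi_\kappa(x,y)=-\frac{\ii\kappa}{4}H_1^{(1)}(\kappa r)\,\nu_x\cdot(x-y)/r$ and $Y_1(z)$ contains $\frac{2}{\pi}J_1(z)\log\frac{z}{2}$, the kernel contains the term
\[
\frac{\kappa}{2\pi}\,\frac{J_1(\kappa r)}{r}\,\nu(t)\cdot\bigl(x(t)-x(s)\bigr)\,\log r .
\]
Because $\nu(t)\cdot(x(t)-x(s))=-\tfrac12\,\nu(t)\cdot x''(t)\,(s-t)^2+O(|s-t|^3)$, this term behaves like $c(t)\,(s-t)^2\log|s-t|$ with $c(t)=-\frac{\kappa^2}{8\pi}\,\nu(t)\cdot x''(t)$. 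This coefficient vanishes only where the curvature does; it is constant and nonzero on the unit circle. The curvature value is only the diagonal limit of the Laplace part of the kernel. It makes the integrand continuous, even $C^1$, but the Fourier coefficients in $s$ still decay only like $|m|^{-3}$. The trapezoidal rule therefore has consistency error of order $N^{-3}$ and no better in general. Your bound on $\bm\tau_N$ is capped at $O(N^{-3})$ for every $p$, so neither the ``order allowed by the regularity'' nor rapid convergence for analytic data follows from the argument as written.

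The repair is to treat $K'$ exactly as you treat $S$. Write $\partial_{\nu_x}\Phi_\kappa(x(t),x(s))\,|x'(s)|=L_1(t,s)\log\bigl(4\sin^2\frac{t-s}{2}\bigr)+L_2(t,s)$ with
\[
L_1(t,s)=\frac{\kappa}{4\pi}\,\frac{J_1(\kappa r)}{r}\,\nu(t)\cdot\bigl(x(t)-x(s)\bigr)\,|x'(s)| ,
\]
which is smooth and vanishes on the diagonal. Then $L_2$ is smooth, with the curvature term as its diagonal value. Apply the product weights \eqref{eq:kress_weights_en} to $L_1$ as in your step~1, and the trapezoidal rule to $L_2$ as in your step~2. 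Step~5 of Algorithm~1 (``from the split kernel'') is compatible with this. However, the paper's written proof only splits the single-layer kernel, so this block has to be argued explicitly rather than taken from the sketch. With that change the rest of your argument goes through.
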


\begin{proof}
The argument has four steps. First, after the logarithmic splitting \eqref{eq:kress_split}, the singular part of the single-layer kernel is isolated and the remaining kernel has the same smoothness as the boundary and impedance. Second, the Kress product quadrature is consistent for the logarithmic term and the periodic trapezoidal rule is high-order accurate for the smooth term. Third, Fourier differentiation is consistent for periodic smooth functions and controls the tangential derivative contribution in the coupled off-diagonal blocks. Fourth, the uniform bound on $A_N^{-1}$ converts the operator consistency error into a density error. The far-field operator is a smooth integral operator; therefore the far-field error is bounded by the density error plus the smooth quadrature error in the far-field evaluation.
\end{proof}

\begin{remark}
The present paper does not attempt to prove unconditional spectral convergence for the full coupled pseudodifferential system. The formulation is instead a stability-based convergence framework consistent with collectively compact Nystr\"om approximation and periodic integral-equation theory \cite{atkinson1997,kress2014,saranen2002}. Loss of smoothness, cornered boundaries, low-regularity impedance, near interior resonances, very small transverse wavenumber, strong oblique coupling, and high-frequency oscillations may reduce the observed convergence rate.
\end{remark}

\section{Numerical Experiments}

All experiments use the same unknowns, the same far-field definition, and the same relative maximum far-field error unless explicitly stated otherwise. The far-field of a single-layer density is evaluated by
\begin{equation}
u_\infty(\hat x)=
\frac{e^{\ii\pi/4}}{\sqrt{8\pi\kappa}}
\int_\Gamma e^{-\ii\kappa \hat x\cdot y}\varphi_1(y)\,\dd s_y,
\end{equation}
and similarly for $v_\infty$. We first define the unnormalized scattering intensity
\begin{equation}
\sigma(\theta)=|u_\infty(\theta)|^2+|v_\infty(\theta)|^2.
\label{eq:sigma}
\end{equation}
For the comparison in Figure \ref{fig:width}, we use the normalized scattering width
\begin{equation}
\widetilde{\sigma}(\theta)
=\frac{\sigma(\theta)}{\max_\theta\sigma_A(\theta)},
\label{eq:sigma_norm}
\end{equation}
where $\sigma_A$ denotes the scattering intensity for the uniform impedance profile A.

\begin{table}[htbp]
\centering
\caption{Numerical parameters used in the experiments}
\begin{tabular}{ll}
\toprule
Parameter & Value or description\\
\midrule
Free-space wavenumber & $k=4$\\
Default incidence angle & $\alpha=60^\circ$, hence $\kappa/k=\sin\alpha$\\
Default impedance & $\eta=0.80+0.30\ii$\\
Default coupling coefficient & $\mu=0.35$\\
GMRES tolerance & $10^{-10}$ relative residual\\
Far-field sampling & 720 uniform angles for error tests\\
Reference for circular tests & Fourier--Bessel mode matching\\
Reference for non-circular test & high-order Nystr\"om solution with $N_{\rm ref}=384$\\
\bottomrule
\end{tabular}
\end{table}

\subsection{Experiment 1: Manufactured Fourier--Bessel benchmark}

The manufactured solution is given by outgoing Fourier--Bessel series
\begin{equation}
u(r,\theta)=\sum_{m=-M}^{M}U_m
\frac{H_m^{(1)}(\kappa r)}{H_m^{(1)}(\kappa)}e^{\ii m\theta},
\qquad
v(r,\theta)=\sum_{m=-M}^{M}V_m
\frac{H_m^{(1)}(\kappa r)}{H_m^{(1)}(\kappa)}e^{\ii m\theta}.
\end{equation}
We take $M=120$ and exponentially decaying coefficients proportional to $\exp(-|m|/10)$. The boundary data $f_1,f_2$ and the exact far-field patterns are generated from the same series. The relative error is
\begin{equation}
E_N=
\frac{\max_\theta\left(|u_{\infty,N}-u_\infty|^2+|v_{\infty,N}-v_\infty|^2\right)^{1/2}}
{\max_\theta\left(|u_\infty|^2+|v_\infty|^2\right)^{1/2}} .
\end{equation}

The low-order baseline uses the same unknowns and the same far-field formula, but replaces the high-order singular quadrature and Fourier tangential differentiation by a lower-order trigonometric interpolation/central-difference treatment on the circle. It is included only as a reproducible baseline, not as a claim about all low-order methods.

\begin{table}[htbp]
\centering
\caption{Far-field errors for the manufactured Fourier--Bessel benchmark}
\begin{tabular}{ccccc}
\toprule
$N$ & High-order error & Low-order error & High-order rate & Low-order rate\\
\midrule
32  &$5.447\times10^{-1}$&$4.811\times10^{-1}$&--&--\\
48  &$1.451\times10^{-1}$&$1.475\times10^{-1}$&3.26&2.92\\
64  &$3.126\times10^{-2}$&$3.026\times10^{-2}$&5.34&5.51\\
96  &$1.840\times10^{-3}$&$4.556\times10^{-3}$&6.99&4.67\\
128 &$2.226\times10^{-5}$&$2.568\times10^{-3}$&15.35&1.99\\
192 &$2.875\times10^{-15}$&$1.150\times10^{-3}$&round-off&1.98\\
256 &$1.980\times10^{-15}$&$6.486\times10^{-4}$&round-off&1.99\\
\bottomrule
\end{tabular}
\end{table}

\begin{figure}[htbp]
\centering
\begin{tikzpicture}
\begin{loglogaxis}[
width=0.76\textwidth,
height=0.48\textwidth,
xlabel={number of boundary nodes $N$},
ylabel={relative far-field error},
grid=both,
legend style={at={(0.5,-0.22)},anchor=north,legend columns=3}
]
\addplot+[mark=o,thick] table[x=N,y=err_imp,col sep=comma]{data/convergence.csv};
\addplot+[mark=square,thick,dashed] table[x=N,y=err_lin,col sep=comma]{data/convergence.csv};
\addplot+[mark=none,dotted,thick,domain=32:256] {0.48*(x/32)^(-2)};
\legend{high-order Nystr\"om,low-order baseline,$O(N^{-2})$}
\end{loglogaxis}
\end{tikzpicture}
\caption{Convergence of far-field errors with an $O(N^{-2})$ reference slope.}
\end{figure}

\subsection{Experiment 2: Plane-wave incidence for a circular impedance cylinder}

To avoid relying only on manufactured data, we next use a true plane-wave incidence. Let the incident axial component be
\begin{equation}
u^i=e^{\ii\kappa r\cos(\theta-\theta_0)}
=\sum_{m=-\infty}^{\infty}c_mJ_m(\kappa r)e^{\ii m\theta},
\qquad
c_m=\ii^m e^{-\ii m\theta_0},
\end{equation}
and set $v^i=p\,u^i$ with a fixed polarization factor $p$. On the unit circle, $\partial_s=\partial_\theta$, and the scattered fields are expanded as
\begin{equation}
u^s=\sum_{m=-\infty}^{\infty} a_mH_m^{(1)}(\kappa r)e^{\ii m\theta},\qquad
v^s=\sum_{m=-\infty}^{\infty} b_mH_m^{(1)}(\kappa r)e^{\ii m\theta}.
\end{equation}
For constant impedance $\eta$, each Fourier mode gives a $2\times2$ system
\begin{equation}
\mathcal M_m
\begin{bmatrix}a_m\\ b_m\end{bmatrix}
=-\bm g_m,
\end{equation}
where
\begin{equation}
\mathcal M_m=
\begin{bmatrix}
d_m&-\mu\,\ii m H_m^{(1)}(\kappa)\\
\mu\,\ii m H_m^{(1)}(\kappa)&d_m
\end{bmatrix},
\qquad
d_m=\kappa {H_m^{(1)}}'(\kappa)+\ii\eta H_m^{(1)}(\kappa).
\end{equation}
The incident-field right-hand side is
\begin{equation}
\bm g_m=
c_m
\begin{bmatrix}
\kappa J_m'(\kappa)+\ii\eta J_m(\kappa)-\mu p\,\ii m J_m(\kappa)\\
p\left(\kappa J_m'(\kappa)+\ii\eta J_m(\kappa)\right)+\mu\,\ii m J_m(\kappa)
\end{bmatrix}.
\end{equation}
The mode-matching solution obtained from these systems is used as the reference. The truncation order is $M=160$; for the parameters used here, the neglected high modes are below double-precision accuracy. This experiment validates the method for a physical plane-wave incidence, not merely for manufactured boundary data.

\begin{table}[htbp]
\centering
\caption{Plane-wave circular-cylinder validation}
\begin{tabular}{ccl}
\toprule
$N$ & Relative far-field error & Comment\\
\midrule
8  &$3.580\times10^{-1}$&coarse grid\\
12 &$1.946\times10^{-1}$&pre-asymptotic regime\\
16 &$1.601\times10^{-2}$&rapid decrease\\
24 &$1.137\times10^{-6}$&resolved Fourier modes\\
32 &$3.657\times10^{-12}$&near round-off\\
48 &$1.166\times10^{-15}$&round-off\\
64 &$9.156\times10^{-16}$&round-off\\
\bottomrule
\end{tabular}
\end{table}

\subsection{Experiment 3: Smooth non-circular boundary}

We consider the smooth three-lobed boundary
\begin{equation}
r(t)=1+0.15\cos(3t).
\end{equation}
The right-hand side is generated by a physical plane wave. Since no simple separation-of-variables solution is available, the reference solution is computed with the same high-order Nystr\"om method on a fine grid with $N_{\rm ref}=384$.

\begin{table}[htbp]
\centering
\caption{Smooth non-circular boundary test}
\begin{tabular}{ccc}
\toprule
$N$ & Relative far-field error & Observed rate\\
\midrule
48 &$6.593\times10^{-4}$&--\\
64 &$2.692\times10^{-4}$&3.11\\
96 &$7.725\times10^{-5}$&3.08\\
128&$3.165\times10^{-5}$&3.10\\
192&$8.480\times10^{-6}$&3.25\\
256&$2.872\times10^{-6}$&3.76\\
\bottomrule
\end{tabular}
\end{table}

The observed convergence is algebraic and stable. It is slower than in the analytic circular test because the implementation differentiates the Nystr\"om single-layer trace and uses geometry-dependent diagonal corrections. This experiment is closer to the behavior expected for general smooth parametrized curves.

\subsection{Experiment 4: Preconditioning test}

We compare the condition number of $A_N$, the condition number of $P_N^{-1}A_N$, and GMRES iterations with and without preconditioning. The first scan varies the incidence angle.

\begin{table}[htbp]
\centering
\caption{Condition numbers and GMRES iterations for different incidence angles}
\begin{tabular}{cccccc}
\toprule
$\alpha$ & $\kappa/k$ & $\operatorname{cond}(A_N)$ & $\operatorname{cond}(P_N^{-1}A_N)$ & GMRES & Prec. GMRES\\
\midrule
$30^\circ$&0.500&4.89&2.52&22&18\\
$45^\circ$&0.707&5.20&2.86&25&20\\
$60^\circ$&0.866&14.88&3.10&28&21\\
$75^\circ$&0.966&5.78&3.32&29&22\\
\bottomrule
\end{tabular}
\end{table}

The second scan varies the coupling coefficient $\mu$.
\begin{table}[htbp]
\centering
\caption{Effect of coupling strength on preconditioned GMRES}
\begin{tabular}{ccccc}
\toprule
$\mu$ & $\operatorname{cond}(A_N)$ & $\operatorname{cond}(P_N^{-1}A_N)$ & GMRES & Prec. GMRES\\
\midrule
0.10&13.77&1.37&23&11\\
0.35&14.88&3.10&28&21\\
0.70&19.57&13.50&46&38\\
1.00&280.96&210.91&129&123\\
\bottomrule
\end{tabular}
\end{table}

The circular examples are not extremely ill-conditioned, so the preconditioner produces a moderate but consistent benefit for weak and intermediate coupling. When $\mu$ is large, the off-diagonal tangential blocks dominate and a simple scalar block preconditioner is less effective. This suggests that Schur-complement or block-factorization preconditioners may be useful for stronger coupling or more complicated geometries.

\subsection{Experiment 5: Variable-impedance scattering-width reduction}

Finally we demonstrate the forward solver in a finite-dimensional impedance-design calculation. The target angular sector is the backward sector
\[
\Theta=[120^\circ,180^\circ].
\]
We consider impedance profiles of the form
\begin{equation}
\eta(t)=\eta_0+\eta_1\cos(t-t_0),
\qquad \eta_0=0.80+0.30\ii,
\end{equation}
with
\[
\operatorname{Re}\eta_1\in\{0.04,0.08,0.12,0.16,0.20\},\qquad
\operatorname{Im}\eta_1\in\{0,0.03,0.06,0.09\}.
\]
The phase $t_0$ is sampled at 16 uniform values. Candidates violating the passivity constraints $\operatorname{Re}\eta(t)\ge0$ and $\operatorname{Im}\eta(t)\ge0$ are discarded. The objective is
\begin{equation}
J(\eta)=\frac1{|\Theta|}\int_\Theta \sigma_\eta(\theta)\,\dd\theta
+\gamma\int_0^{2\pi}|\eta(t)-\eta_0|^2\,\dd t,\qquad \gamma=0.02.
\end{equation}
Here $\sigma_\eta$ is the unnormalized scattering intensity defined in \eqref{eq:sigma}. The relative sector mean and relative backscattering values reported in Table \ref{tab:impedance_search} are
\begin{equation}
R_{\rm sec}(\eta)=
\frac{\int_\Theta \sigma_\eta(\theta)\,\dd\theta}
{\int_\Theta \sigma_A(\theta)\,\dd\theta},
\qquad
R_{\rm back}(\eta)=
\frac{\sigma_\eta(180^\circ)}{\sigma_A(180^\circ)}.
\end{equation}
The result is the best candidate within this finite search set, not a global optimum.

\begin{table}[htbp]
\centering
\caption{Finite-grid search results for variable impedance profiles. Relative quantities are normalized by the uniform-A profile.}
\label{tab:impedance_search}
\begin{tabular}{lccccc}
\toprule
Profile & $\operatorname{Re}\bar\eta$ & $\operatorname{Im}\bar\eta$ & amplitude & Relative sector mean & Relative backscatter\\
\midrule
uniform A&0.25&0.05&0&1.000&1.000\\
uniform B&0.80&0.30&0&0.646&0.669\\
modulated C&0.80&0.30&$0.20+0.06\ii$&0.564&0.582\\
\bottomrule
\end{tabular}
\end{table}

\begin{figure}[htbp]
\centering
\begin{tikzpicture}
\begin{axis}[
width=0.80\textwidth,
height=0.48\textwidth,
xlabel={observation angle $\theta$ (degrees)},
ylabel={normalized scattering width $\widetilde{\sigma}$},
xmin=0,xmax=180,
grid=both,
legend style={at={(0.5,-0.22)},anchor=north,legend columns=3}
]
\addplot+[thick] table[x=theta_deg,y=uniform_A,col sep=comma]{data/scattering_width_curves_normalized.csv};
\addplot+[thick,dashed] table[x=theta_deg,y=uniform_B,col sep=comma]{data/scattering_width_curves_normalized.csv};
\addplot+[thick,dashdotted] table[x=theta_deg,y=modulated_C,col sep=comma]{data/scattering_width_curves_normalized.csv};
\legend{uniform A,uniform B,modulated C}
\end{axis}
\end{tikzpicture}
\caption{Bistatic scattering width for different impedance profiles. The curves are normalized by the maximum scattering intensity of the uniform-A profile.}
\label{fig:width}
\end{figure}

The modulation changes both the amplitude and phase of the equivalent boundary response. In the backward sector, contributions from different boundary arcs may therefore interfere more destructively than in the uniform case. This observation is specific to the finite search set above and should not be interpreted as a global scattering-minimization result.

\subsection*{Reproducibility statement}

To facilitate reproducibility, all numerical parameters, quadrature rules, error metrics, and impedance search grids are reported in the paper. The scripts used to generate the numerical tables and figures, together with the input parameters and post-processing routines, are available from the corresponding author upon reasonable request.

\section{Conclusion}

We have considered a high-order numerical implementation for an existing coupled impedance boundary integral formulation of oblique-incidence electromagnetic scattering by cylinders. The method combines Kress-type singular quadrature for the logarithmic single-layer kernel, Fourier differentiation for the tangential derivative coupling, and a block diagonal preconditioner based on the scalar impedance subproblem. The theoretical discussion is deliberately framed as a stability-based convergence result: under continuous well-posedness and uniform discrete stability, density and far-field errors are controlled by the consistency of the singular quadrature and tangential differentiation.

The numerical experiments support the effectiveness of the approach on smooth boundaries. Analytic circular benchmarks show rapid convergence until round-off errors are reached, a true plane-wave circular-cylinder test confirms that the method is not limited to manufactured data, and a smooth non-circular example exhibits stable algebraic convergence. The preconditioning tests show moderate improvement for weak and intermediate coupling. The variable-impedance example illustrates how the forward solver can be used in finite-dimensional scattering-width reduction over a prescribed backward angular sector.

Several limitations remain. The present implementation assumes smooth boundaries, a single frequency, and a finite-dimensional impedance search. It does not address cornered geometries, multiply connected domains, broadband optimization, full shape or material optimization, or dispersive material impedance models. Future work will consider weighted Nystr\"om discretizations for piecewise smooth curves, Schur-complement preconditioners, multi-connected scatterers, adjoint-gradient impedance design, and broadband scattering-width reduction.

\end{document}